\documentclass{article}

\usepackage{arxiv}

\usepackage[utf8]{inputenc} 
\usepackage[T1]{fontenc}    
\usepackage{hyperref}       
\usepackage{url}            
\usepackage{booktabs}       
\usepackage{amsfonts}       
\usepackage{nicefrac}       
\usepackage{microtype}      
\usepackage{lipsum}		
\usepackage{graphicx}
\usepackage{natbib}
\usepackage{doi}
\usepackage{multicol}

\newcommand{\real}{\mathbb{R}}

\renewcommand{\real}{\mathbb{R}}

\newcommand{\lagr}{\mathcal{F}}
\newcommand{\En}{\operatorname{E}}
\newcommand{\xh}{x_{(h)}}
\newcommand{\ssubset}{\subset\joinrel\subset}

\DeclareSymbolFont{bbold}{U}{bbold}{m}{n}
\DeclareSymbolFontAlphabet{\mathbbold}{bbold}

\newcommand{\vect}[1]{\mathbbold{#1}}
\newcommand{\vectorones}[1][]{\vect{1}_{#1}}
\newcommand{\vectorzeros}[1][]{\vect{0}_{#1}}

\usepackage{amsmath}
\usepackage{amssymb}
\usepackage{amsthm}
\usepackage{xcolor}
\usepackage{tikz}
\usetikzlibrary{matrix, backgrounds, calc}
\usepackage{hyperref}       
\usepackage{subfig}
\usepackage{mathtools, nccmath}
\usepackage{upgreek}
\usepackage{cancel}

\usepackage{graphicx}

\newtheorem{theorem}{Theorem}

\newtheorem{corollary}[theorem]{Corollary}
\newtheorem{proposition}[theorem]{Proposition}
\newtheorem{definition}[theorem]{Definition}

\newtheorem{remark}{Remark}
\newtheorem{problem}{Problem}

\definecolor{gnblue4}{RGB}{0,108,212} 
\definecolor{gnblue6}{RGB}{35,156,255}
\definecolor{gnpurple4}{RGB}{72,46,146} 
\hypersetup{colorlinks=true, linkcolor=gnpurple4, breaklinks=true, urlcolor=gnpurple4, citecolor=gnpurple4}

\title{Hybrid Energy-Based Models for Physical AI: Provably Stable Identification of Port-Hamiltonian Dynamics}

\author{ \href{https://orcid.org/0009-0000-3444-0838}{\includegraphics[scale=0.06]{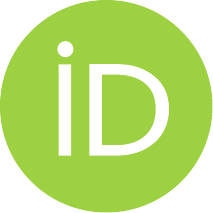}\hspace{1mm}Simone Betteti} \\
RIAS Lab\\ 
	The Italian Institute of\\
	Artificial Intelligence for Industry\\
	Turin, 10129, IT \\
	\texttt{simone.betteti[at]ai4i.it} \\
	\And
	\href{https://orcid.org/0000-0003-1190-6097}{\includegraphics[scale=0.06]{Fig/orcid.pdf}\hspace{1mm}Luca Laurenti\thanks{L.L. is also with the Delft Center for Systems and Control, TU Delft, Delft, 2600 AA, The Netherlands.}} \\
RIAS Lab\\ 
	The Italian Institute of\\
	Artificial Intelligence for Industry\\
	Turin, 10129, IT \\
	\texttt{luca.laurenti[at]ai4i.it} \\
}

\renewcommand{\shorttitle}{Hybrid Energy-Based Models for Physical AI}

\hypersetup{
pdftitle={Hybrid Energy-Based Models for Physical AI: Provably Stable Identification of Port-Hamiltonian Dynamics},
pdfsubject={eess.SY, cs.LG, cs.AI},
pdfauthor={Simone Betteti, Luca Laurenti},
pdfkeywords={Energy-based models, Physical AI, Stable System Identification},
}

\begin{document}
\maketitle

\begin{abstract}

Energy-based models (EBMs) implement inference as gradient descent on a learned Lyapunov function, yielding interpretable, structure-preserving alternatives to black-box neural ODEs and aligning naturally with physical AI. Yet their use in system identification remains limited, and existing architectures lack formal stability guarantees that globally preclude unstable modes. We address this gap by introducing an EBM framework for system identification with stable, dissipative, absorbing invariant dynamics. Unlike classical global Lyapunov stability, absorbing invariance expands the class of stability-preserving architectures, enabling more flexible and expressive EBMs. We extend EBM theory to nonsmooth activations by establishing negative energy dissipation via Clarke derivatives and deriving new conditions for radial unboundedness, exposing a stability-expressivity tradeoff in standard EBMs. To overcome this, we introduce a hybrid architecture with a dynamical visible layer and static hidden layers, prove absorbing invariance under mild assumptions, and show that these guarantees extend to port-Hamiltonian EBMs. Experiments on metric-deformed multi-well and ring systems validate the approach, showcasing how our hybrid EBM architecture combines expressivity with sound and provable safety guarantees by design.

\end{abstract}

\section{INTRODUCTION}
\emph{Energy-based models} (EBMs) are a class of machine-learning architectures in which neural dynamics arise as the gradient flow of a Hopfield-type Lyapunov function. Rooted in the pioneering work of~\cite{CMA-GS:83} and~\cite{HJJ:82, HJJ:84}, EBMs encode meaningful data patterns as stable equilibria of a dissipative dynamical system, ensuring convergence towards stored representations through gradient descent on an energy landscape. However, the classical Hopfield construction offered limited expressive capacity for modern learning tasks, with the number of retrievable patterns scaling as $N/\log(N)$ for networks of size $N$~\citep{MR-PE-RE:87}. Recent generalizations of Hopfield networks~\citep{KD-HJJ:20, HB-LY-PB:23} have dramatically expanded their applicability. Modern EBMs preserve the foundational energy-based structure while exhibiting formal parallels with Transformer architectures~\citep{RH-SB-LJ:21}, enabling an exponential scaling of capacity in $N$~\citep{DM-HJ-LM:17}. Their alignment with contemporary neural architectures has facilitated successful deployments across standard machine-learning domains such as image classification~\citep{HB-CDH-SH:24}, image generation~\citep{AL:24, PGY-KJ-KB:23}, and in-context learning~\citep{WW-HTY-HJYC:25}. Canonically, learning in \emph{energy-based models} relies on optimizing the network parameters to map relevant data-class centroids to stable equilibria and their associated basin of attraction in the \emph{energy} landscape.  

Despite their dynamical foundation, EBMs have rarely been deployed for system identification, a domain where the underlying objective of capturing the evolution of nonlinear systems from data is inherently dynamical. System identification~\citep{AKJ-EP:71, LL:10} sits at the interface of control theory and optimization and has progressively integrated machine-learning components~\citep{CRTQ-RY-BJ:18} to accommodate increasingly complex real-world systems. This emerging direction, now widely referred to as \emph{physical AI}, has driven notable progress in robotics~\citep{PA-GR-YW:22}, power-grid modeling~\citep{CR-JH-SGS:22}, and industrial manufacturing~\citep{DP-DC-HJA:18}. Yet, the safety-critical nature of these domains demands not only expressive models, but also structural guarantees of stability and robustness~\citep{DH-LB-YL:21}. Recent efforts with dissipative neural networks and stable neural ODEs address this need by enforcing parameter constraints~\citep{DJ-TA-VS:22, XY-SS:23}. Notably, recent work on learning Port-Hamiltonian dynamics~\citep{MS-PM-BM:20,RFJ-KDK-KM:25} has centered on learning optimality and empirical benchmarking, with limited attention to stability guarantees or architectural considerations. In summary, existing neural identifiers either lack formal stability guarantees outside the energy and dissipative frameworks or require strong smoothness and convexity assumptions.

\noindent\textbf{Contributions.}
This work introduces an energy-based system identification framework that reconciles stability with expressive modeling. Its main contributions are:
\begin{itemize}
    \item a generalized Lyapunov stability analysis for EBMs with nonsmooth activations, establishing energy dissipation via Clarke derivatives and revealing a structural stability-expressivity trade-off;
    \item a \emph{hybrid EBM architecture} with dissipative, absorbing invariant visible-layer dynamics, guaranteeing bounded trajectories while avoiding the restrictive conditions imposed by fully recurrent EBMs, and static hidden layer maps for fast and efficient inference;
    \item a Port-Hamiltonian extension enabling identification of nonlinear dynamics under state-dependent metrics and rotational perturbations, validated on metric-deformed multi-well and ring systems.
\end{itemize}
Section~II presents the necessary preliminaries. Section~III introduces EBMs and formulates the problem. Section~IV extends classical $C^2$ dissipation results to locally Lipschitz activations and establishes new conditions for radial unboundedness. Section~V develops the hybrid architecture and proves absorbing invariance, including its Port-Hamiltonian generalization. Section~VI reports numerical validation, and Section~VII concludes with future directions.

\section{PRELIMINARIES}
\noindent \textbf{Notation.} 
$\vectorones[d]$ denotes the $d$-dimensional vector with all ones, $\vectorzeros[d]$ the $d$-dimensional vector with all zeros, while $\mathcal{I}_d$ denotes the identity matrix in $\real^{d\times d}$. We denote compact subsets of $\real^d$ as sets $\mathcal{B}\ssubset\real^d$, and their boundary as $\partial\mathcal{B}$. For two real vectors $x,y$ of the same dimension, $x^\top y$ denotes the standard inner product. Let $f:\real^d\to\real$; for $f\in C^k(\real^d)$, the function is $k$-times continuously differentiable. For $f\in C^2(\real^d)$, the gradient of $f$ is denoted as $\nabla f$, and the Hessian as $D(\nabla f)=D^2f$. The partial derivative of $f$ with respect to the variable $x_{i}$ is denoted as  $\partial f/\partial{x_{i}}$. $\Theta$ bounds the growth of $f\sim\Theta(g)$ through the existence of constants $a,b>0$ such and a function $g$ such that $a\ g(x)< f(x) < b\ g(x)$. Given functions $g:\real^n\to\real^d$ and $f:\real^d\to\real$, we denote the composition of the two functions as $f\circ g(y)$, for $y\in\real^n$. The abbreviation a.e. stands for almost everywhere, that is for all $\mathcal{B}\subset \real^d$ except zero measure sets. Given a matrix $A\in\real^{d\times d}$, we denote with $A^\top$ its transpose. In case $A$ is symmetric, $\uplambda_{\min}(A),\ \uplambda_{\max}(A)\in\real$ denote its minimum and its maximum eigenvalue. We denote with $B_r(x)\subset \real^d$ the ball of radius $r>0$ and center $x\in\real^d$.

\subsection{Lie and Clarke's derivatives}
\begin{definition}[Lie derivative]\label{def: lie}
    Let $X:\real^{N}\to\real^{N}$ be a locally Lipschitz continuous vector field and let $\Phi_{X}:\real_{\geq 0}\to\real^{N}$ denote the associated local flow. For a smooth function $g:\real^{N}\to\real$, the Lie derivative of $g$ along $X$ is defined as
    \begin{equation}\label{eq: lie-der}
        \mathcal{L}_{X}g(x) = \frac{d}{dt}|_{t=0} (g\ \circ \Phi_{X}^{t} )(x),
    \end{equation}
    and coincides with the directional derivative
    \begin{equation}\label{eq: dir-der}
        \mathcal{L}_{X}g(x)=\nabla_{X}g(x)=\nabla_{x}g(x)^\top X(x).
    \end{equation}
\end{definition}
Lie derivatives~\citep[Chapters 3,~9]{LJM:12} will be central for the characterization of the negative definiteness of the \emph{energy} function along the trajectories generated by the EBM dynamics. We extend the notion of derivatives to nonsmooth functions by recalling Clarke's generalized gradient and directional derivative~\citep{CFH:75}, which apply to any locally Lipschitz function.
\begin{definition}[Generalized gradient]\label{eq: gen-grad}
Let $f:\real^N\to\real$ be a locally Lipschitz function. The generalized gradient of $f$ at $x\in\real^N$ is denoted as $\partial f(x)$ and is the convex hull of the set of limits
\begin{equation}
	\lim_{k\to +\infty} \nabla f(x+h_k),
\end{equation}
with $h_k\xrightarrow[]{k\to +\infty} \vectorzeros[N]$ and $f$ differentiable at $x+h_k\in\real^N$ for all $k\in\mathbb{N}$.
\end{definition}
\begin{definition}[Generalized directional derivative]\label{def: gen-dir}
Let $f:\real^N\to\real$ be a locally Lipschitz function and take $v\in\real^N$. Then the Clarke's generalized directional derivative at $x\in\real^N$ is defined as
\begin{align}
	f^{\circ}(x;v)&=\lim_{h\to\vectorzeros[N]}\sup_{\updelta\to 0} \frac{f(x+h+\updelta v)-f(x+h)}{\updelta}\\
	&=\max_{\upxi\in\partial f(x)} \upxi^\top v\label{eq: gen-dir}.
\end{align}
\end{definition}
In particular, combining~\eqref{eq: dir-der} with~\eqref{eq: gen-dir}, the Clarke Lie derivative 
of a locally Lipschitz function $f$ along a vector $v$ is given by
\begin{equation}\label{eq: gen-Lie}
    \mathcal{L}_v f(x)
        := f^{\circ}(x;v)
        = \max_{\xi\in\partial f(x)} \xi^\top v,
\end{equation}
extending the classical smooth definition to the nonsmooth setting.

\subsection{Conjugate (Legendre) transform}
\begin{definition}[Conjugate transform]
    Let $\lagr:\real^N\to\real$ be proper and convex. The \emph{conjugate transform} of $\mathcal{F}$ (also referred to as \emph{Legendre transform}) is defined as
    \begin{equation}\label{eq: conj-trans}
        \lagr^{\star}(y) = \sup_{x\in\real^N} x^\top y - \lagr(x).
    \end{equation}
\end{definition}
Standard results from convex analysis~\citep[Chapter 13]{BHH-CPL:17} provide information on the convexity and asymptotic growth of the conjugate transform. For polynomially growing convex functions $\lagr(x)\sim\Theta(\|x\|^q)$, the conjugate $\lagr^\star$ is also convex and $\lagr^\star(y)\sim\Theta(\|y\|^\frac{q}{q-1})$.
\begin{remark}[Smooth proper convex functions]\label{rm: smooth}
    If $\lagr\in C^{1}(\real^N)$, then any maximizer of \eqref{eq: conj-trans} satisfies the first-order optimality condition
    \begin{align}
        \vectorzeros[N] &=\nabla_{x}[x^\top y-\lagr(x)]\nonumber\\
        &= y-\nabla\lagr(x).
    \end{align}
    It then follows that $\lagr^\star(y)_{|y=\nabla\lagr(x)}=x^{\top}\nabla\lagr(x)-\lagr(x)$. Combining this with the growth of the conjugate $\lagr^\star$ yields
    \begin{align}\label{eq: sm-growth}
        \lagr^\star(y)_{y=\nabla\lagr(x)}&\sim \Theta(\|y\|^{q/(q-1)}_{|y=\nabla\lagr(x)})\nonumber\\
        &\sim \Theta(\|x\|^{(q-1)\frac{q}{q-1}}) = \Theta(\|x\|^{q}),
    \end{align}
    where the final equivalence uses the fact that the gradient of a $q$-coercive convex function grows $(q-1)$-polynomially in $\|x\|$.
\end{remark}

\section{PROBLEM STATEMENT}
EBMs were initially introduced in~\citep{KD-HJJ:16,KD-HJJ:20} as fully recurrent architectures, where all units interact with one another. Subsequent work specialized EBMs into modular, layered architectures~\citep{HB-CDH-SH:22,KL-SJJ-KD:25} to improve learning efficiency and compatibility with standard MLP-based frameworks. In this paper, we consider EBMs as defined by the autonomous nonlinear dynamical systems in Definition~\ref{def: EBMs} below, with parametric interactions between units, and that can be trained for machine-learning tasks using standard optimization techniques~\citep{BL-CFE-NJ:18}.
\begin{definition}[Energy-based models (EBMs) ]\label{def: EBMs}
    Let $\lagr:\real^{N}\to\real$ be a proper, convex, and $C^1$ function, and define $\Psi(x)=\nabla \lagr(x)$ to be the activation function of the \emph{energy-based model}. Further, let $W\in\real^{N\times N}$ be the symmetric matrix of parameters and $b\in\real^N$ a bias vector. Then, an \emph{energy-based model} is defined as a dynamical system 
    \begin{equation}\label{eq: EBM}
        \dot x = -x +W\Psi(x)+b,
    \end{equation}
    and has associated \emph{energy}
    \begin{equation}\label{eq: EBM-ene}
        \En(x)=-\frac{1}{2}\Psi(x)^{\top}W\Psi(x)+(x-b)^{\top}\Psi(x)-\lagr(x).
    \end{equation}
\end{definition}
In classical EBM theory, stability of the dynamics~\eqref{eq: EBM} is ensured by proving that the energy function $\En(x)$ in~\eqref{eq: EBM-ene} is dissipative along the dynamics~\eqref{eq: EBM} and radially unbounded. As shown in Section~IV, this Lyapunov-based viewpoint imposes restrictive growth conditions on the activation functions, leading to a \emph{stability-expressivity trade-off}. In contrast, this paper relies on a different notion of stability for system~\ref{def: EBMs}, which leverages absorbing invariant sets and, as we will show, allows for more expressive \emph{stable-by-design} architectures. An absorbing invariant set, in the sense of Definition~\ref{def: abs-inv}, is a region that all trajectories are eventually drawn into (absorbing) and from which none can escape once inside (invariant).
\begin{definition}{(Absorbing invariant set)}\label{def: abs-inv}
Let $\mathcal{D}\ssubset \real^N$ be a compact set and $X:\real^N\to\real^N$ a locally Lipschitz vector field. Consider $\Phi_X^t$ the flow induced by $X$. Then the set $\mathcal{D}$ is
\begin{itemize}
    \item[(i)] absorbing if $\lim_{t\to +\infty}\Phi_X^t(x) \in\mathcal{D}$ for all $x\in\real^N$;
    \item[(ii)] invariant if $X(x)^\top \upeta(x)\leq 0$ for all $x\in\partial\mathcal{D}$, with $\upeta(x)\in\real^N$ outer normal to $\partial\mathcal{D}$ at $x$.
\end{itemize}
\end{definition}
An absorbing invariant set as in Definition~\ref{def: abs-inv} provides a natural \emph{safe set} for the EBM dynamics~\eqref{eq: EBM}, confining trajectories to a predictable region of the state space. 
Dissipativity further enforces non-increasing energy, ruling out divergence and steering trajectories toward energetically stable subsets of $\mathcal{D}$, including any equilibria contained within it. Neural models that guarantee such bounded and predictable behavior are essential for deploying learning-based identifiers in safety-critical settings. Formally, in this paper we consider the following problem:
\begin{problem}
\label{prob:ProbForm}
Given samples $(x,f^\star(x)) \in \mathcal{X}$ from an unknown autonomous system $\dot{x} = f^\star(x)$, our objective is to identify the underlying dynamics through a EBM~\eqref{eq: EBM} $\dot{x} = f(x;W)$ for which there exists an invariant absorbing compact set $\mathcal{D}$ containing the equilibrium points of $f^\star$.
\end{problem}
In the following sections we show that Problem~\ref{prob:ProbForm} can be addressed by relaxing the assumptions typically required to guarantee Lyapunov asymptotic stability for EBMs~\citep{KD-HJJ:16}. Our approach is non-trivial: it removes the stringent conditions imposed by classical Lyapunov arguments while still yielding a \emph{stable-by-design} neural architecture suitable for reliable system identification. Existing EBM formulations either remain bounded to Lyapunov-based constraints or rely on strongly convex parameterizations, offering far less flexibility than the framework proposed here. In Section~IV, we extend standard dissipativity arguments to nonsmooth activations and show that Lyapunov asymptotic stability enforces a sublinearity condition that restricts expressivity. Section~V introduces the hybrid EBM architecture, which circumvents this limitation by requiring bounded activation only in the first hidden layer. This boundedness confines the contributions of all deeper hidden layers to compact sets, allowing us to construct a conservative absorbing invariant set $\mathcal{D}$ for the hybrid dynamics.

\section{LYAPUNOV STABILITY AND THE STABILITY-EXPRESSIVITY TRADE-OFF}
Classical EBM analyses assume $\lagr\in C^2(\mathbb{R}^N)$ so that the Hessian $D^2\lagr$ is well defined and positive semidefinite, ensuring energy dissipation. However, practical architectures often use activations that are only locally Lipschitz and thus differentiable almost everywhere. We therefore extend the standard $C^2$ argument by employing Clarke generalized derivatives~\citep{CFH:75}, which recover the same negative-definiteness property of the Lie derivative without requiring twice differentiability. For simplicity we take $b=\vectorzeros[N]$, though the results extend to the general case. We begin by establishing dissipativity for nonsmooth activations in Proposition~\ref{prop: dis}.
\begin{proposition}[Negative definiteness of the energy derivative]\label{prop: dis}
    Let $\lagr\in C^1(\real^N)$ and let $f_{E}(x)=-x+W\Psi(x)$ be the vector field associated to the EBM dynamics~\eqref{eq: EBM}. Then for all $x\in\real^N$ and $\En(x)$ defined in~\eqref{eq: EBM-ene}
    \begin{equation}
        \mathcal{L}_{f_E}\En(x)\leq 0\qquad \text{a.e.}
    \end{equation}
\end{proposition}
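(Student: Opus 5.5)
The plan is to reduce the statement to the classical Hopfield computation, carried out at points where $\Psi$ is differentiable, and then use Rademacher's theorem to cover almost every $x$. With $b=\vectorzeros[N]$, the energy~\eqref{eq: EBM-ene} reads $\En(x)=-\tfrac12\Psi(x)^\top W\Psi(x)+x^\top\Psi(x)-\lagr(x)$. Since $\lagr\in C^1$ and $\Psi=\nabla\lagr$ is locally Lipschitz (the setting of Section~IV), $\En$ is locally Lipschitz. By Rademacher's theorem, $\Psi$ is differentiable a.e., with Jacobian $D\Psi(x)=D^2\lagr(x)$ at such points. This Jacobian is symmetric and positive semidefinite a.e., because it is the (a.e.) Hessian of a convex $C^{1,1}_{\mathrm{loc}}$ function; this can be justified via Alexandrov's theorem, or directly from the monotonicity of $\Psi$, which gives $v^\top D\Psi(x)v=\lim_{t\to0}t^{-2}(\Psi(x+tv)-\Psi(x))^\top(tv)\ge 0$.

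At a point of differentiability, apply the chain rule term by term:
\begin{equation*}
\nabla\En(x)=-D\Psi(x)^\top W\Psi(x)+\Psi(x)+D\Psi(x)^\top x-\nabla\lagr(x)=D\Psi(x)^\top\bigl(x-W\Psi(x)\bigr),
\end{equation*}
where $W=W^\top$ was used and the terms $\Psi(x)$ and $-\nabla\lagr(x)$ cancel. Since $x-W\Psi(x)=-f_E(x)$, we get
\begin{equation*}
\nabla\En(x)^\top f_E(x)=-f_E(x)^\top D\Psi(x)f_E(x)\le 0.
\end{equation*}
At differentiability points $\En$ is differentiable, so this classical Lie derivative is the one appearing in~\eqref{eq: dir-der}.

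To connect with the Clarke Lie derivative~\eqref{eq: gen-Lie}, use Definition~\ref{eq: gen-grad}. One would like $\partial\En(x)$ to reduce to $\{\nabla\En(x)\}$ a.e., but this holds only where $\En$ is strictly differentiable. So I would instead characterize every element of $\partial\En(x)$ as a convex combination of limits $D\Psi(x_k)^\top(-f_E(x_k))$ with $x_k\to x$. By continuity of $f_E$ these limits have the form $-M f_E(x)$, with $M$ in the generalized Jacobian of $\Psi$ at $x$. Each such $M$ is a limit of symmetric positive semidefinite matrices, hence itself symmetric positive semidefinite. Therefore every $\xi\in\partial\En(x)$ satisfies $\xi^\top f_E(x)=-f_E(x)^\top M f_E(x)\le 0$, and taking the maximum gives $\mathcal{L}_{f_E}\En(x)\le 0$.

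The main obstacle is this last passage from the pointwise a.e.\ computation to the Clarke maximum. It requires that the limiting Jacobians of $\Psi$ inherit symmetry and positive semidefiniteness, and that the product structure $D\Psi^\top(x-W\Psi)$ survives limits. The product structure is fine because the factor $x-W\Psi(x)$ is continuous and $D\Psi$ is locally bounded. Symmetry at differentiability points needs care: I would obtain it from the fact that a.e.\ the Jacobian of a gradient map equals the Alexandrov Hessian. If $\Psi$ were merely continuous, I would fall back to proving monotone decrease of $t\mapsto\En(\Phi^t(x))$ through the difference-quotient form of Definition~\ref{def: gen-dir}.
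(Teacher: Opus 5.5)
Your proposal is correct and follows the paper's route: compute $\nabla\En=-D\Psi\, f_E$ at differentiability points of $\Psi$, get positive semidefiniteness from the convexity of $\lagr$, describe $\partial\En(x)$ through limiting Jacobians $M$ of $\Psi$ as vectors $-Mf_E(x)$, and bound the Clarke maximum by $-f_E^\top M f_E\le 0$. Your version is slightly more careful than the paper's: you use only the inclusion of $\partial\En(x)$ in that set, which is justified by Clarke's result that the generalized gradient can be computed while ignoring any null set (here the non-differentiability set of $\Psi$). Your worry about symmetry is unnecessary, since $v^\top M^\top v=v^\top Mv$ means monotonicity of $\Psi$ alone suffices.
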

\begin{proof}
    Since $\Psi=\nabla\lagr$ is locally Lipschitz, it is differentiable a.e., and at such points
    \begin{align}
        \nabla \En(x)=&-D\Psi(x)W\Psi(x)+D\Psi(x)x\nonumber\\
                    =&-D\Psi(x)f_{E}(x),
    \end{align}
    where in the first passage we have exploited the symmetry of $D\Psi(x)$ and $W$. By Rademacher theorem~\citep[Chapter 6]{ELC-GRF:15}, $D\Psi(x)$ is defined a.e., and by the convexity of $\lagr$ every element $\Sigma(x)\in\partial(D\Psi)(x)$ of the Clarke generalized Hessian is positive semi-definite, i.e. $\Sigma(x)\succeq 0$ for all $x\in\real^N$. The Clarke generalized gradient of the \emph{energy} is then
    \begin{equation}
        \partial\En(x)=\{-\Sigma(x)f_{E}(x):\: \Sigma(x)\in\partial(D\Psi)(x)\}.
    \end{equation}
    The Clark Lie derivative is then
    \begin{align}
        \mathcal{L}_{f_E}\En(x)&=\max_{v\in\partial\En(x)} v^\top f_{E}(x)\nonumber\\
        &=\max_{\Sigma(x)\in\partial(D\Psi)(x)} -f_{E}(x)^\top\Sigma(x)f_{E}(x)\nonumber\\
        &= \max_{\Sigma(x)\in\partial(D\Psi)(x)} -\|f_{E}(x)\|_{\Sigma(x)^{1/2}}^2\leq 0.
    \end{align}
\end{proof}
The classical EBM literature focuses solely on dissipation and convergence, overlooking the crucial issue of radial unboundedness of the energy. Proposition~\ref{prop: EBM-ru} below provides necessary and sufficient conditions for radial unboundedness in a fully connected EBM architecture, without which trajectories may escape to infinity, producing pathological behaviors such as diverging activity or exploding gradients during training. We consider the realistic case in which $W\in\real^{N\times N}$, learned from data-driven optimization, is not necessarily negative definite and has positive eigenvalues driving the quadratic term of the energy to $-\infty$.
\begin{proposition}[Radial unboudedness of the energy derivative]\label{prop: EBM-ru}
    Let $W\in\real^{N\times N}$ with no zero entries and such that $\uplambda_{\max}(W)>0$. Let $\lagr:\real^N\to\real$ be a proper, convex and $C^1$ function. Suppose that $\lagr(x)\sim \Theta(\|x\|^q)$ with $q\in]1,+\infty[$. Then
    \begin{equation}\label{eq: EBM-ene-ru}
        \liminf_{\|x\|\to +\infty} \En(x) = +\infty\qquad\iff\qquad q<2.
    \end{equation}
\end{proposition}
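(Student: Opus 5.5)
With $b=\vectorzeros[N]$, I would split the energy~\eqref{eq: EBM-ene} into
\begin{equation*}
\En(x)=\underbrace{x^\top\Psi(x)-\lagr(x)}_{=\lagr^\star(\Psi(x))}-\tfrac12\Psi(x)^\top W\Psi(x),
\end{equation*}
and compare the growth of the two terms. By Remark~\ref{rm: smooth}, the first term equals $\lagr^\star(y)$ evaluated at $y=\Psi(x)=\nabla\lagr(x)$. By~\eqref{eq: sm-growth} it therefore grows like $\Theta(\|x\|^q)$, or equivalently like $\Theta(\|\Psi(x)\|^{q/(q-1)})$. Since $\lagr$ is $q$-coercive and convex, $\|\Psi(x)\|\sim\Theta(\|x\|^{q-1})$, so the quadratic term is bounded in magnitude by $\tfrac12\|W\|\,\|\Psi(x)\|^2\sim\Theta(\|x\|^{2(q-1)})$. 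The problem thus reduces to comparing the exponents $q$ and $2q-2$: $q>2q-2$ exactly when $q<2$.

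For sufficiency ($q<2\Rightarrow$ radial unboundedness), I would use the lower bound $\lagr^\star(\Psi(x))\ge a\|x\|^q$. Together with $\tfrac12\Psi^\top W\Psi\le\tfrac12\uplambda_{\max}(W)\|\Psi(x)\|^2\le c\|x\|^{2q-2}$ this gives $\En(x)\ge a\|x\|^q-c\|x\|^{2q-2}\to+\infty$, because $2q-2<q$. This direction holds uniformly in the direction of $x$, so the $\liminf$ is $+\infty$.

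For necessity ($q\ge2\Rightarrow\liminf<+\infty$), I would exhibit a ray along which the energy fails to diverge. I would take $x$ along directions where $\Psi(x)$ aligns with a top eigenvector $v$ of $W$ (eigenvalue $\uplambda_{\max}(W)>0$), so that $-\tfrac12\Psi^\top W\Psi\approx-\tfrac12\uplambda_{\max}\|\Psi\|^2$. The cleanest route is to parametrize by $y=\Psi(x)=s v$ with $s\to\infty$, using that $\Psi$ is onto large spheres in every direction by coercivity, since $\nabla\lagr$ of a coercive convex $C^1$ function is surjective. Along this ray $\En=\lagr^\star(sv)-\tfrac12\uplambda_{\max}s^2\le b' s^{q/(q-1)}-\tfrac12\uplambda_{\max}s^2$. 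For $q>2$ we have $q/(q-1)<2$, so this tends to $-\infty$, giving $\liminf=-\infty$.

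The main obstacle is the borderline $q=2$, where both terms are $\Theta(s^2)$ and the conclusion depends on constants. I would try to use the exact quadratic-type bound $\lagr^\star(y)\le \tfrac{1}{2\mu}\|y\|^2$ when $\lagr\ge\tfrac{\mu}{2}\|x\|^2$, and argue that the nonzero-entry, positive-eigenvalue hypothesis on $W$ (after a scaling of $W$) makes $\tfrac12\uplambda_{\max}>\tfrac1{2\mu}$. If that fails, I would interpret $\Theta$ with the constants $a,b$ as given and state that the $q=2$ case requires $\uplambda_{\max}(W)$ large relative to the convexity modulus, treating it as the delicate case of the equivalence. A secondary technical point is justifying $\|\nabla\lagr(x)\|\sim\Theta(\|x\|^{q-1})$ from $\lagr\sim\Theta(\|x\|^q)$ alone. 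I would obtain the upper bound from convexity, via $\lagr(2x)\ge\lagr(x)+\nabla\lagr(x)^\top x$ combined with a radial-monotonicity argument, and the lower bound from $\lagr(x)-\lagr(0)\le\nabla\lagr(x)^\top x$.
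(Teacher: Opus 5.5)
Your arguments for $q<2$ and for $q>2$ are sound, and they follow the paper's route. You write the $b=\vectorzeros[N]$ energy as $\lagr^\star(\Psi(x))-\frac12\Psi(x)^\top W\Psi(x)$ and compare the exponents $q$ and $2(q-1)$.

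For necessity when $q>2$ you are more careful than the paper. The paper only has the upper bound $|\Psi^\top W\Psi|\lesssim\|\Psi\|^2$, so it never exhibits a direction in which the quadratic term is of order $\|x\|^{2(q-1)}$ with the destabilising sign. Your ray $\Psi(x_s)=sv$, obtained from surjectivity of $\nabla\lagr$, supplies exactly that direction and gives $\liminf\En=-\infty$. Add explicitly that $\|x_s\|\to+\infty$, which holds because $\Psi$ is bounded on bounded sets.

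One small repair: $\lagr(2x)\ge\lagr(x)+\nabla\lagr(x)^\top x$ controls only the radial component of the gradient. Test convexity instead at $x+\|x\|u$ with $u=\nabla\lagr(x)/\|\nabla\lagr(x)\|$. This gives $\|x\|\,\|\nabla\lagr(x)\|\le\lagr(x+\|x\|u)-\lagr(x)\le(2^qb-a)\|x\|^q+C$ directly.

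The genuine gap is $q=2$, and the fix you sketch cannot work. The hypotheses on $W$ (no zero entries, $\uplambda_{\max}(W)>0$) are invariant under $W\mapsto tW$ with $t>0$. They therefore cannot force $\uplambda_{\max}(W)\mu>1$, and you are not allowed to rescale the given $W$.

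In fact the stated equivalence is false at $q=2$. Take $\lagr(x)=\frac12\|x\|^2$, so $\Psi(x)=x$, and $W=\epsilon\,\vectorones[N]\vectorones[N]^\top$ with $0<\epsilon N<1$. Then $\En(x)=\frac12x^\top(\mathcal{I}_N-W)x\ge\frac12(1-\epsilon N)\|x\|^2$ is radially unbounded although $q\not<2$. For $\epsilon N>1$ the same energy tends to $-\infty$ along $\vectorones[N]$.

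So no argument can prove the stated ``$\Rightarrow$'' direction. The paper's own proof has the same hole: ``$q$ dominates $2(q-1)$ iff $q>2(q-1)$'' silently treats equal exponents as settled.

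What is actually true is what your fallback says:
\begin{itemize}
\item for $q<2$, the energy is radially unbounded;
\item for $q>2$, $\liminf_{\|x\|\to+\infty}\En(x)=-\infty$;
\item for $q=2$, the outcome depends on constants. Your bound $\lagr^\star(y)\le\|y\|^2/(2\mu)$ gives $\liminf=-\infty$ whenever $\uplambda_{\max}(W)\mu>1$, and the example above shows radial unboundedness can hold otherwise.
\end{itemize}
State the result in this corrected form rather than trying to prove the equivalence at $q=2$.
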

\begin{proof}
    Let $q\in]1,+\infty[$ and consider the decomposition of~\eqref{eq: EBM-ene} as $\En(x)=\En_{Quad}(x)+\En_{Ham}(x)$.
    \begin{itemize}
        \item[(i)] $\En_{Quad}(x)=-\frac{1}{2}\Psi(x)^\top W\Psi(x)$: since $\lagr(x)\sim \Theta(\|x\|^q)$, then we have that $\Psi(x)=\nabla\lagr(x)\sim \Theta(\|x\|^{q-1})$. Since $W$ has no zero entries, the greatest growing entries of the quadratic product grow as $|z|^{2(q-1)}$ for $z\in\real$, and in fact we have
            \begin{align}
                |\nabla\lagr(x)^\top W\nabla\lagr(x)|&\leq \uplambda_{\max}(W)\ \|\nabla\lagr(x)\|^{2}_2\nonumber.
            \end{align}
            Therefore, $-\En_{Quad}(x)\sim \Theta(\|x\|^{2(q-1)})$.
        \item[(ii)] $\En_{Ham}(x)=x^\top\Psi(x)-\lagr(x)$: from the convex duality result (Remark~\ref{rm: smooth}), $\En_{Ham}(x)\sim \Theta(\|x\|^q)$.
    \end{itemize}
    Consequently, the \emph{energy} $\En(x)=\En_{Quad}(x)+\En_{Ham}(x)$ is radially unbounded if
    \begin{equation}
        \liminf_{\|x\|\to +\infty} -\|\nabla\lagr(x)\|^{2}_2+x^{\top}\nabla\lagr(x)-\lagr(x)=+\infty.
    \end{equation}
    This is equivalent to require that the growth $q$ of $\En_{Ham}$ dominates the growth $2(q-1)$ of $\En_{Quad}$, which happens iff
    \begin{equation}
        q>2(q-1) \iff q<2.
    \end{equation}
\end{proof}
Proposition~\ref{prop: EBM-ru} implies that fully recurrent EBMs are fundamentally limited: stability requires sublinear activations $q\in(1,2)$, which does not make possible the use of high-degree polynomial activations advocated in~\citep{KD-HJJ:16,KD-HJJ:20}. Even multilayer feedforward EBMs~\citep{HB-CDH-SH:22} only soften this restriction, still requiring sublinear activations in every second layer to maintain radial unboundedness. Practical models bypass these constraints with heavy normalization throughout the network. In contrast, the hybrid architecture introduced next achieves stability via invariance with normalization applied solely to the first hidden layer, fully restoring expressivity while guaranteeing bounded trajectories.

\section{MULTI-LAYER EBMs AND ABSORBING INVARIANCE}
The main contribution of our work is the introduction of a hybrid architecture that preserves both computational expressivity and stability of the EBM by requiring boundedness of the activation only in the first hidden layer, and allowing for arbitrary activations in all the deeper layers. We start by restricting our attention to layered architectures, a standard choice in applications, where each hidden layer interacts only with its immediate neighbors.
\begin{definition}[Multi-layer EBMs]\label{def: m-EBMs}
    Let $L\in\mathbb{N}$ be the number of layers, and denote $N_{h}\in\mathbb{N}$ the width of the $h$-layer, with $N=\sum_{h=1}^L N_h$. For each layer, let $\xh\in\real^{N_{h}}$ be its state, and define a family $\{\lagr_{h}\}_{h=1}^L$ of proper, convex, and $C^1$ functions such that
    \begin{equation}\label{eq: m-Lagr}
        \lagr(x) = \sum_{h=1}^L \lagr_h(\xh).
    \end{equation}
    Define the interaction matrix $W\in\real^{N\times N}$ as the block tridiagonal matrix
    \begin{align}\label{eq: m-W}
        W=
        \begin{pmatrix}
            \vectorzeros[N_1\times N_1] & W_{12} & \vectorzeros[N_1\times N_3] & \cdots & \vectorzeros[N_1\times N_L] \\
            W_{12}^\top & \vectorzeros[N_2\times N_2] & W_{23} & \ddots & \vectorzeros[N_2\times N_L] \\
            \vectorzeros[N_2 \times N_1] & W_{23}^\top & \vectorzeros[N_3 \times N_3] & \ddots & \vdots \\
            \vdots & \cdots & \ddots & \ddots & W_{(L-1)L} \\
            \vectorzeros[N_L\times N_1] & \cdots & \cdots & W_{(L-1)L}^\top & \vectorzeros[N_L\times N_L]
        \end{pmatrix}.
    \end{align}
    Then the dynamics of each layer of the EBM architecture are
    \begin{align}
        \dot{x}_{(1)}=&-x_{(1)}+W_{12}\Psi_2(x_{(2)}),\label{eq: l-EBM-1}\\
        \dot{x}_{(h)}=&-\xh+W_{h(h-1)}\Psi_{h-1}(x_{(h-1)})\nonumber\\&+W_{h(h+1)}\Psi_{h+1}(x_{(h+1)}),\quad h=2,\dots,L-1\label{eq: l-EBM-h}\\
        \dot{x}_{(L)}=&-x_{(L)}+W_{L(L-1)}\Psi_{L-1}(x_{(L-1)})\label{eq: l-EBM-L}.
    \end{align}
\end{definition}
For system-identification tasks based on $0$-step encoding-decoding maps $x \mapsto f(x)$, only the visible layer $x_{(1)}$ from~\ref{eq: l-EBM-1} represents measured or reconstructed physical variables. A full multilayer EBM, however, evolves in a higher-dimensional state and converges to an equilibrium determined by all layers, creating two practical difficulties: (i) \textbf{temporal alignment}, i.e., determining when the visible state should be matched with data; and (ii) \textbf{backpropagation through depth}, since hidden layers introduce recurrent dependencies that complicate gradient computation. To preserve the dynamical structure of the visible layer while removing internal recurrences, we adopt a hybrid model in which hidden-layer activity is computed through feedforward maps, whereas the visible layer retains its full EBM dynamics.
\begin{definition}[Hybrid EBM architecure]\label{def: h-EBM}
    Let $\{\upphi^f_{h}\}_{h=2}^L$ be a family of continuous functions, $\upphi^f_h:\real^{N_{h-1}}\to\real^{N_h}$, and define the state of each hidden layer of the multi-layer EBM in Definition~\ref{def: m-EBMs} via the feedforward map
    \begin{equation}\label{eq: feedf}
        x_{(h-1)}\mapsto \upphi_{h}^f(x_{(h-1)})=\xh.
    \end{equation}
    The projected visible-layer dynamics in~\eqref{eq: l-EBM-1} are then defined as the projected energy gradient:
    \begin{align}\label{eq: visible-d}
        \dot{x}_{(1)}&=-\nabla_{x_{(1)}} \En(x_{(1)},\upphi_{2}^f(x_{(1)}),\dots,\upphi_{L}^f(x_{(L-1)}))\nonumber\\
        &=-\sum_{h=1}^L\frac{\partial\Psi_{h}}{x_{(1)}}(x_{(h)})\left[-\xh+\sum_{k=1}^{L}W_{hk}\Psi_{k}(x_{(k)})\right]\nonumber\\
        &=f_{E^1}(x_{(1)}),
    \end{align}
    where the interaction matrices $\{W_{hk}\in\real^{N_h\times N_{k}}\}_{h,k=1}^L$ are defined in~\eqref{eq: m-W}.
\end{definition}
\begin{remark}[Implementability of hybrid EBM architecture]
The formulation in Definition~\ref{def: h-EBM} leads to a highly convenient implementation: once the convex primitives $\{\lagr_h\}_h$, interaction matrices $\{W_{hk}\}_{hk}$, and biases $\{b_h\}_h$ are specified, the EBM is fully determined by the energy function $\En(x)$, and the neural network action is given by $-\nabla_{x_{(1)}} \En(x)$.
\end{remark}
Theorem~\ref{thm: fwd-inv} characterizes sufficient conditions under which the hybrid visible-layer dynamics admit a compact absorbing invariant set. The assumptions reflect typical architectural choices: linear visible activation, bounded hidden activations, and feedforward connections consistent with the EBM structure.
\begin{theorem}[Absorbing invariance of hybrid EBMs]\label{thm: fwd-inv}
    Let the hybrid EBM from Definition~\ref{def: h-EBM} satisfy:
    \begin{itemize}
        \item[(i)] \textbf{linear visible layer}: $\Psi_{1}(x)=x$;
        \item[(ii)] \textbf{bounded first hidden layer}: $\Psi_{2}:\real^{N_2}\to K\ssubset \real^{N_2}$;
        \item[(iii)] \textbf{feedforward hidden maps}: for $h=2,\dots, L$
        \begin{equation}
            \upphi_{h}^f(x_{(h-1)})=W_{h(h-1)}\Psi_{h-1}(x_{(h-1)}).
        \end{equation}
    \end{itemize}
    Then there exists $r>0$ such that the ball $B_{r}(0)\subset\real^{N}$ is $f_{E^1}$-absorbing invariant.
\end{theorem}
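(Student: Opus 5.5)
Under (i)--(iii) the plan is to show that the projected visible-layer field has the form
\begin{equation*}
f_{E^1}(x_{(1)})=-x_{(1)}+g(x_{(1)}),\qquad \|g(x_{(1)})\|\le M,
\end{equation*}
for all $x_{(1)}$ and some constant $M$. Any choice $r>M$ then makes the ball both invariant and absorbing. The guiding observation is that, by (iii), every hidden state $x_{(h)}$ with $h\ge 2$ is a function of $x_{(1)}$ only. Once $x_{(2)}=W_{21}x_{(1)}$ is passed through the bounded activation $\Psi_2$, every deeper quantity is a continuous image of the compact set $K$ and therefore ranges over a compact set.

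\textbf{Step 1: the hidden-layer equilibrium terms vanish.} The bracket $-x_{(h)}+\sum_k W_{hk}\Psi_k(x_{(k)})$ in~\eqref{eq: visible-d} is the residual of the $h$-th layer equation. Under (iii) the feedforward part $W_{h(h-1)}\Psi_{h-1}$ is cancelled exactly by $-x_{(h)}$, so for $2\le h\le L-1$ only the top-down term $W_{h(h+1)}\Psi_{h+1}(x_{(h+1)})$ survives, and for $h=L$ the bracket is zero. For $h=1$, assumption (i) gives $\partial\Psi_1/\partial x_{(1)}=\mathcal{I}$, so the $h=1$ contribution to $f_{E^1}$ is $-x_{(1)}+W_{12}\Psi_2(x_{(2)})$. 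The second term is bounded by $\|W_{12}\|\sup_K\|\cdot\|$.

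\textbf{Step 2: the chain-rule terms for $h\ge2$ are bounded, and this is the main obstacle.} These terms are $-\big(\partial\Psi_h(x_{(h)})/\partial x_{(1)}\big)^{\top}W_{h(h+1)}\Psi_{h+1}(x_{(h+1)})$. Boundedness of the factor $W_{h(h+1)}\Psi_{h+1}(x_{(h+1)})$ is easy. Since $x_{(3)}=W_{32}\Psi_2(x_{(2)})$ lies in the compact set $W_{32}K$, induction with continuity of each $\Psi_h$ and $\upphi_h^f$ shows that $x_{(h)}$ and $\Psi_h(x_{(h)})$ lie in compact sets for all $h\ge3$.

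The difficulty is the Jacobian factor. Because $x_{(2)}=W_{21}x_{(1)}$ is unbounded, we have
\begin{equation*}
\frac{\partial\Psi_h}{\partial x_{(1)}}=D\Psi_h(x_{(h)})\,W_{h(h-1)}\,D\Psi_{h-1}\cdots W_{32}\,D\Psi_2(x_{(2)})\,W_{21}.
\end{equation*}
For $h\ge3$ the factors $D\Psi_k$ are evaluated on compact sets. If the $\Psi_k$ are only locally Lipschitz, I would replace $D\Psi_k$ by elements of Clarke's generalized Jacobian, as in Proposition~\ref{prop: dis}; these are uniformly bounded on compacta. What remains is $D\Psi_2(x_{(2)})$ on the unbounded set of arguments.

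I would try first to show that boundedness of the convex-gradient map $\Psi_2=\nabla\lagr_2$ forces $\|D\Psi_2\|$ to be bounded, or even to decay, at infinity. A bounded, monotone gradient cannot have unbounded derivative uniformly in all directions. If this cannot be derived cleanly, I would make explicit the mild assumption that $\Psi_2$ is globally Lipschitz, which covers the normalized activations (tanh, softmax, sigmoid-type) the paper intends. With it, $\|g\|\le M$ for an explicit $M$ built from the products of the $\|W_{hk}\|$ and the bounds on the $D\Psi_k$.

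\textbf{Step 3: invariance and absorption.} Take $V=\tfrac12\|x_{(1)}\|^2$. Then
\begin{equation*}
\mathcal{L}_{f_{E^1}}V=-\|x_{(1)}\|^2+x_{(1)}^\top g\le -\|x_{(1)}\|\big(\|x_{(1)}\|-M\big),
\end{equation*}
using the Clarke version of the Lie derivative if the field is nonsmooth. For $r>M$ and $x_{(1)}\in\partial B_r(0)$, whose outer normal is $x_{(1)}/r$, this quantity is negative, which gives (ii) of Definition~\ref{def: abs-inv}.

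Outside the ball, $\dot V\le-\epsilon V$ holds for some $\epsilon>0$ depending on $r-M$. By comparison, $\|x_{(1)}(t)\|$ decreases to at most $M<r$ in finite time. So every trajectory enters $B_r(0)$, which is invariant, and hence its $\omega$-limit lies in $B_r(0)$; this is (i). Local Lipschitzness of $f_{E^1}$ together with the bound $\|f_{E^1}\|\le\|x_{(1)}\|+M$ rules out finite escape time, so the flow is global and the limit statement is meaningful.
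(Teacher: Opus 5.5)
Your argument follows the paper's route. You use (iii) to cancel $-x_{(h)}$ against the feedforward input, reduce the field to $-x_{(1)}+g(x_{(1)})$, bound $g$ uniformly, and check $x_{(1)}^\top f_{E^1}\le -r^2+rM$ on $\partial B_r(0)$. The one substantive issue is the one you isolate in Step~2, and there you are more careful than the paper. The paper closes that step by asserting that, because $\Psi_2$ takes values in the compact $K$, ``necessarily'' $\|\partial\Psi_2/\partial x_{(1)}\|<M$. It also treats $\partial\Psi_h/\partial x_{(1)}$ for $h\ge3$ as a function of $x_{(h)}\in K_h$ alone, which hides the chain-rule factor $D\Psi_2(W_{21}x_{(1)})W_{21}$ that you wrote out.

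That implication is false even for gradients of convex functions. So the lemma you planned to try first cannot be proved, and for $L\ge3$ the theorem itself fails without an extra hypothesis. Take $L=3$, all widths one, $W_{12}=W_{23}=1$, $\Psi_1=\Psi_3=\mathrm{id}$, and
\begin{equation*}
\Psi_2(z)=1+\sum_{n\ge1}2^{-n}\,\sigma\bigl(4^{n}(z-n)\bigr),\qquad \sigma(s)=\frac{1}{1+e^{-s}} .
\end{equation*}
This $\Psi_2$ is smooth and increasing, so $\Psi_2=\lagr_2'$ with $\lagr_2$ convex, and it takes values in $[1,2]$. Yet $\Psi_2'(n)\ge 2^n/4$, while $\sigma'(s)\le e^{-|s|}$ gives $\Psi_2'(n\pm\tfrac12)<\tfrac13$. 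The reduced field \eqref{eq: EBM-h-red} is $f_{E^1}(x)=-x+\Psi_2(x)\bigl(1+\Psi_2'(x)\bigr)$. Here the $h\ge2$ terms carry a plus sign, since $-\nabla_{x_{(1)}}\En=\sum_h(\partial\Psi_h/\partial x_{(1)})^\top[-x_{(h)}+\sum_kW_{hk}\Psi_k]$; the minus in your Step~2 is harmless for your norm bound. For every $n\ge5$ this field is positive at $n$ and negative at $n\pm\tfrac12$. Hence there are equilibria at arbitrarily large $|x|$, and no ball is absorbing.

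So some hypothesis of the kind you propose is necessary, not just convenient. Your fallback, $\Psi_2$ globally Lipschitz (or any uniform bound on $D\Psi_2$ along the range of $W_{21}$), is the natural choice; softmax, $\tanh$ and sigmoids satisfy it. With it your Step~2, and hence the proof, go through. For $L=2$ no Jacobian term survives and the statement holds as written.

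Your Step~3 is tighter than the paper's ending. The paper takes $r=\sum_h\upgamma_h$ exactly, so it only has $x_{(1)}^\top f_{E^1}\le0$ on $\partial B_r(0)$, and it argues absorption informally through nested balls $B_R(0)$ with $R\to\infty$. Your exponential comparison for $V$ outside the ball, together with the linear-growth bound that excludes finite escape, gives absorption directly and with an entry-time estimate.

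Two small corrections. First, $\dot V\le-\epsilon V$ holds only outside $B_r(0)$. What it yields is entry into $B_r(0)$ in finite time, after which the strict inequality on $\partial B_r(0)$ keeps the trajectory inside; it does not show decrease to $\|x_{(1)}\|\le M$. You get $\limsup_{t\to\infty}\|x_{(1)}(t)\|\le M$ by letting $r\downarrow M$. Second, $f_{E^1}$ itself contains the Jacobians $D\Psi_h$ for $2\le h\le L-1$. The local Lipschitz property of the field that you invoke therefore needs these Jacobians to be locally Lipschitz, which the paper assumes implicitly through Definition~\ref{def: abs-inv}. Taking a Clarke derivative of the smooth function $V$ does not help if the field is only defined almost everywhere.
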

\begin{proof}
    Define the \textbf{feedback hidden maps}
    \begin{equation}\label{eq: feedb}
        \upphi_h^b(x_{(h+1)})=W_{h(h+1)}\Psi_{(h+1)}(x_{(h+1)}).
    \end{equation}
    Using assumption (iii), the dissipation term $-\xh$ cancels the feedforward term for every hidden layer. Together with (i), the visible dynamics~\eqref{eq: visible-d} become
    \begin{equation}\label{eq: EBM-h-red}
        f_{E^1}(x_{(1)})=-x_{(1)}+\sum_{h=1}^{L-1} \frac{\partial \Psi_{h}}{\partial x_{(1)}}(\xh)\upphi_h^b(x_{(h+1)}),
    \end{equation}
    where we have used $\partial\Psi_1/\partial x_{(1)}\equiv \mathcal{I}_{N_1}$ for the dissipation term of the visible layer.

    From hypothesis (ii) notice that $\Psi_{2}(\upphi_{2}^f(x_{(1)}))\in K$ for all $x_{(1)}\in\real^{N_{1}}$, and consequently
    \begin{equation}
        \upphi_3^f(x_{(2)})\in K_3\ssubset\real^{N_{3}}.
    \end{equation}
    Since $\{\upphi^f_h\}_{h=2}^L$ are all continuous functions
    \begin{itemize}
        \item \textbf{Boundedness of the feedforward maps $\upphi^f_h$}:\\
        For all $h=3,\dots,L$ there exists $K_{h}\ssubset \real^{N_{h}}$ such that the feedforward map
        \begin{align}
            \xh&=\upphi_h^f(x_{(h-1)})\nonumber\\
            &=\upphi_h^f\circ\dots\circ\upphi_2^f(x_{(1)})\in K_{h},
        \end{align}
        for all $x_{(1)}\in\real^{N_1}$.
        \item \textbf{Boundedness of the feedback maps $\upphi_h^b$}:\\
        By definition~\eqref{eq: feedb}, $\{\upphi_h^b\}_{h=1}^{L-1}$ are also continuous. Exploiting the boundedness of the feedforward maps, for all $h=2,\dots,L-1$ there exist $\Tilde K_{h}\ssubset \real^{N_{h}}$ such that the feedback map
        \begin{align}
            \upphi_h^b(x_{(h+1)})&=\upphi^b_h(\upphi^f_{h+1}(\xh))\nonumber\\
            &=\upphi^b_h(\upphi^f_{h+1}\circ\dots\circ\upphi_2^f(x_{(1)}))\in \Tilde K_{h}.
        \end{align}
        Additionally, observe that $\upphi_1^b(x_{(2)})=W_{12}\Psi_2(x_{(2)})$, and from hypothesis (ii) we recover boundedness of the feedback maps for all $h=1,\dots,L-1$.
        \item \textbf{Boundedness of the Jacobians $\partial\Psi_h/\partial x_{(1)}$}:\\
        Since $\Psi_2:\real^{N_2}\to K$ and $K\ssubset \real^{N_2}$ is compact, then necessarily $\|\partial\Psi_2/\partial x_{(1)}\|<M$ uniformly for some $M>0$. Since $\Psi_h$ are locally Lipschitz, then $\partial\Psi_h/\partial x_{(1)}$ are bounded on every compact in $\real^{N_{h}}$ for $h\geq 3$.
    \end{itemize}
    Express $\xh=\upphi^f_h\circ\dots\circ\upphi_2^f(x_{(1)})$ and let
    \begin{equation}
        \upgamma_{h}(x_{(1)})
        =\left\|\frac{\partial \Psi_h}{\partial x_{(1)}}(\xh)\upphi_h^b\left(\upphi_{h+1}^f(\xh)\right)\right\|_{2}.
    \end{equation}
    For all $h=1,\dots,L-1$ there exists
    \begin{align}
        \upgamma_h &= \sup_{x_{(1)}\in\real^{N_1}} \upgamma_h(x_{(1)})\nonumber\\
        &=\max_{\xh\in K_h} \left\|\frac{\partial \Psi_h}{\partial x_{(1)}}(\xh)\upphi_h^b\left(\upphi_{h+1}^f(\xh)\right)\right\|_{2}.
    \end{align}
    Define the radius
    \begin{equation}\label{eq: rad}
        r = {\textstyle\sum_{h=1}^{L-1}}\upgamma_h.
    \end{equation}
    and the ball $B_{r}(0)\in\real^{N_{1}}$, for any $x_{(1)}\in\partial B_{r}(0)$ ($x_{(1)}/r$ outer normal)
    \begin{align}
        &x_{(1)}^\top f_{E^1}(x_{(1)})=\nonumber\\
        &=-\|x_{(1)}\|_2^2+x_{(1)}^{\top}{\textstyle\sum_{h=1}^{L-1}} \frac{\partial \Psi_{h}}{\partial x_{(1)}}(\xh)\upphi^b_h(\upphi_{h+1}^f(\xh))\nonumber\\
        &\leq -r^2+r\left\|{\textstyle\sum_{h=1}^{L-1}} \frac{\partial \Psi_{h}}{\partial x_{(1)}}(\xh)\upphi^b_h(\upphi_{h+1}^f(\xh))\right\|_2\nonumber\\
        &\leq -r^2 +r \left({\textstyle\sum_{h=1}^{L-1}}\upgamma_h(x_{(1)})\right)\nonumber\\
        &\leq -r^2+r\left({\textstyle\sum_{h=1}^{L-1}}\upgamma_h\right)=-r^2+r^2=0.
    \end{align}
    Finally, observe that for any $R>r$ we have $x_{(1)}^\top f_{E^1}(x_{(1)})<0$ for all $x_{(1)}\in\partial B_{R}(0)$, and by continuity of $f_{E^1}$ there exists a $\upgamma_R>0$ such that $x_{(1)}^\top f_{E^1}(x_{(1)})<-\upgamma_R$ uniformly in $\partial B_{R}(0)$. Therefore, for all $\updelta t>0$ there exists a $0<\varepsilon<R-r$ such that if $x_0\in\partial B_{R}(0)$, then $\Phi_{f_{E^1}}^{\updelta t}(x_0)\in B_{R-\varepsilon}(0)$. Consequently, $\lim_{t\to +\infty} \Phi_{f_{E^1}}^{\updelta t}(x)\in B_r(0)$ for all $x\in B_R(0)$, and taking $R\to +\infty$, we conclude.
\end{proof}
Theorem~\ref{thm: fwd-inv} weakens the conditions required in Proposition~\ref{prop: EBM-ru} and enables the use of expressive architectures with arbitrary deep hidden-layer activations while preserving stability, thereby fulfilling the original aim of~\citep{KD-HJJ:16} to broaden the admissible activation class for EBMs. Corollary~\ref{cor: PHD} below establishes that absorbing invariance and dissipativity of the projected visible-layer EBM dynamics~\eqref{eq: l-EBM-1} are robust under a Port‑Hamiltonian transformations, ensuring that the stability guarantees derived for EBMs carry over to a larger and practically relevant class of dynamics.
\begin{corollary}[Absorbing invariance and dissipativity of Port-Hamiltonian EBMs]\label{cor: PHD}
    Let $Q:\real^{N_1}\to\real^{{N_1}\times {N_1}}$ be continuous and uniformly bounded, and assume $Q(x)+Q(x)^\top\succ 0$ for all $x\in\real^{N_1}$. Let $f_{E^1}(x)$ denote the EBM vector field~\eqref{eq: EBM-h-red}, admitting an absorbing invariant ball $B_r(0)$ with radius $r>0$ defined in~\eqref{eq: rad}. Then there exist $q_{\max}>q_{\min}>0$ and $\uprho:=\frac{q_{\max}}{q_{\min}}r$ such that the Port-Hamiltonian EBM vector field
    \begin{equation}
        \mathcal{H}_E(x) = Q(x)f_{E^1}(x),
    \end{equation}
    is absorbing invariant on every ball $B_{R}(0)$ for all $R\geq \uprho$. Furthermore,
    \begin{equation}
        \mathcal{L}_{\mathcal{H}_E}\En(x)\leq 0\qquad \text{a.e.}
    \end{equation}
\end{corollary}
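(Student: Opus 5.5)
The plan is to reduce both claims to the structure already established in Theorem~\ref{thm: fwd-inv} and Proposition~\ref{prop: dis}, and to use the positive-definiteness of the symmetric part of $Q$ together with its uniform boundedness. Write $S(x)=\tfrac12(Q(x)+Q(x)^\top)$ and define
\begin{equation*}
q_{\min}=\inf_{x}\uplambda_{\min}(S(x)),\qquad q_{\max}=\sup_x\|Q(x)\|_2 .
\end{equation*}
Uniform boundedness gives $q_{\max}<+\infty$. Positivity of $q_{\min}$ is the main obstacle: pointwise $S(x)\succ 0$ plus continuity only gives a positive minimum on compacts. I would first try to argue that only a compact region matters, namely the behaviour outside a large ball together with the fact that $f_{E^1}$ is bounded on compacts. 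If that fails, I would read the hypothesis as uniform positive definiteness, i.e.\ $S(x)\succeq q_{\min}\mathcal{I}_{N_1}$. I would state this explicitly as the interpretation under which the corollary holds. Note that $q_{\max}\geq q_{\min}$ automatically, and strict inequality can be arranged by enlarging $q_{\max}$.

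\textbf{Invariance.} Decompose $f_{E^1}(x)=-x+g(x)$, where by~\eqref{eq: EBM-h-red} and the proof of Theorem~\ref{thm: fwd-inv} we have $\|g(x)\|_2\leq r$ for all $x$. Then
\begin{equation*}
x^\top Q(x)f_{E^1}(x)=-x^\top S(x)x+x^\top Q(x)g(x)\leq -q_{\min}\|x\|_2^2+q_{\max}r\|x\|_2 .
\end{equation*}
Hence on $\partial B_R(0)$ with $R\geq \uprho=q_{\max}r/q_{\min}$ the right-hand side is $R(-q_{\min}R+q_{\max}r)\leq 0$, which gives Definition~\ref{def: abs-inv}(ii). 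For $R>\uprho$ the inequality is strict, with a uniform margin $R(q_{\min}R-q_{\max}r)>0$.

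\textbf{Absorption.} I would repeat the final argument of Theorem~\ref{thm: fwd-inv}, this time with $V(x)=\|x\|_2^2$: outside $B_\uprho(0)$ we have $\dot V\leq -2\|x\|(q_{\min}\|x\|-q_{\max}r)$. This is strictly negative and bounded away from zero on each annulus $\{\uprho+\varepsilon\leq\|x\|\leq R'\}$. So every trajectory enters $B_{\uprho+\varepsilon}(0)$ in finite time and stays there, for every $\varepsilon>0$, and hence is absorbed into $B_R(0)$ for every $R\geq\uprho$. Forward completeness follows because $V$ is nonincreasing outside $B_\uprho(0)$.

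\textbf{Dissipativity.} I would reuse the computation of Proposition~\ref{prop: dis} for the projected energy. At points of differentiability, $\nabla_{x_{(1)}}\En=-f_{E^1}$ by the very definition~\eqref{eq: visible-d}, so
\begin{equation*}
\mathcal{L}_{\mathcal{H}_E}\En(x)=-f_{E^1}(x)^\top Q(x)f_{E^1}(x)=-f_{E^1}(x)^\top S(x)f_{E^1}(x)\leq 0\quad\text{a.e.}
\end{equation*}
Here the skew part of $Q$ (the rotational, port-Hamiltonian component) drops out of the quadratic form. In the Clarke setting one takes the maximum over elements of the generalized gradient as in Proposition~\ref{prop: dis}. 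Each such element is a limit of $-f_{E^1}$ at nearby differentiability points, so by continuity the same sign holds.
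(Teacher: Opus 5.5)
Your proposal is correct and matches the paper's proof essentially step for step. It uses the same constants $q_{\min},q_{\max}$, the same sphere estimate $x^\top Q(x)f_{E^1}(x)\leq -q_{\min}\|x\|_2^2+q_{\max}r\|x\|_2$ obtained from $\|f_{E^1}(x)+x\|_2\leq r$, the same reuse of the absorption argument of Theorem~\ref{thm: fwd-inv} (which you make more explicit with $\|x\|_2^2$), and the same quadratic form $-f_{E^1}^\top Q f_{E^1}\leq 0$ for dissipativity, with the correct sign $\nabla_{x_{(1)}}\En=-f_{E^1}$.

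Your reading of the hypothesis as uniform positive definiteness is exactly what the paper silently invokes (``$q_{\min}>0$ by uniform positive definiteness''). Your compact-region fallback could not replace it. Take the hybrid EBM with $L=2$, $W_{12}=\mathcal{I}_2$, $\Psi_2=\tanh$, so $f_{E^1}(x)=-x+\tanh(x)$ on $\real^2$, and $Q(x)=(1+\|x\|_2^2)^{-1}\mathcal{I}_2+J$ with $J$ the counterclockwise rotation by $\pi/2$. Then $x^\top Q(x)f_{E^1}(x)>0$ at $x=(t,2t)$ for all $t\geq 2$, so no large ball is invariant.
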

\begin{proof}
    Define
    \begin{equation}
        q_{\text{min}} = \inf_{x\in\real^{N_1}/{\vectorzeros[N_1]}} \frac{x^\top Q(x) x}{x^\top x};\quad q_{\max} = \max_{x\in\real^{N_1}} \|Q(x)\|_2,\nonumber
    \end{equation}
    with $q_{\min}>0$ by uniform positive definiteness and $0<q_{\max}<+\infty$ by boundedness.
    Then for $x\in \partial B_{R}(0)$ with $R>r$ we have
    \begin{align}
        x^{\top}\mathcal{H}_E(x)&=x^{\top}Q(x)f_{E^1}(x)\nonumber\\
        &=x^{\top}Q(x)\left[-x+\textstyle\sum_{h=1}^{L-1} \frac{\partial \Psi_{h}}{\partial x}(\xh)\upphi_h^b(x_{(h+1)})\right]\nonumber\\
        &\leq -q_{\min}\|x\|_2^2+q_{\max}r\|x\|_2\nonumber\\
        &=-R(q_{\min}R-q_{\max}r).
    \end{align}
    The inner product is negative for any 
    \begin{equation}\label{eq: rad-Q}
        R>\uprho:=\frac{q_{\max}}{q_{\min}}r,
    \end{equation}
    and following the same approach as in~\ref{thm: fwd-inv}, we conclude on the absorbing invariance. Finally, observe that at point of differentiability $\nabla \En(x)=f_{E^1}(x)$~\eqref{eq: EBM-h-red}, and consequently the Clarke Lie derivative of $\En$ along $\mathcal{H}_E$ is
    \begin{align}
        \mathcal{L}_{\mathcal{H}_E}\En(x) &= \max_{v\in\partial\En(x)} v^\top \mathcal{H}_{E}(x)\nonumber\\
        &=\max_{v\in\partial\En(x)} -v^\top Q(x) v\leq 0,
    \end{align}
    by the positive definiteness of $Q(x)+Q(x)^\top$.
\end{proof}
We have proven how absorbing invariance and dissipativity results for hybrid EBMs extends seamlessly to a broader class of Port‑Hamiltonian~\citep{VDSA:07} energy‑based systems, where the dissipative gradient flow is composed with a state‑dependent metric. In this setting, the EBM vector field is premultiplied by a continuous matrix field $Q(x)$ whose symmetric part remains uniformly positive definite, thereby preserving dissipativity while introducing geometry‑dependent rotations. Port-Hamiltonian models capture dynamics evolving in non‑Euclidean geometries, as well as systems subjected to rotational or gyroscopic perturbations, and play a central role in geometric control, robotics, and structure‑preserving numerical schemes. Therefore, our proposed Port-Hamiltonian hybrid EBM architecture naturally provides a safe and \emph{ready-to-use} data-driven paradigm for system identification in many relevant and safety-critical real-world settings.

\section{MODEL VALIDATION}
\noindent\textbf{Setting.} The hybrid EBM validation focuses on identifying nonlinear dynamics generated by the gradient of nontrivial potentials pre-multiplied by a state-dependent metric. To showcase the ability of EBMs to identify rich, strongly nonlinear dynamics we benchmark the proposed framework on two characteristic systems on $\mathbb{R}^2$:
\begin{enumerate}
	\item a \textbf{multi-well potential} $V_{m}:\real^2\to\real$ exhibiting multiple convex-concave regions
	\begin{equation}\label{eq: Vm}
		V_{m}(x,y)=\upalpha(x^2+\upbeta y^2)+\upomega(\sin(\upgamma x)\cos(y)),
	\end{equation}
	\item an \textbf{exotic potential} $V_{e}:\real^2\to\real$ characterized by multiple local minima and a ring-shaped invariant set where the vector field vanishes
	\begin{align}
		V_{e}(x,y)=&\upalpha(x^2+y^2-1)^2+\upomega(\sin(\upgamma x)\cos(\upzeta y))
    +\upxi\sin(x^2-y^2)+\uptheta\cos(\uprho xy)
    +\updelta(x^2y-y^3).
	\end{align}
\end{enumerate}
where the parameters $\upalpha,\upbeta,\upomega,\upgamma,\upzeta,\upxi,\uptheta,\updelta>0$ are arbitrarily chosen to emphasize specific qualities of the dynamics, i.e., ruggedness of the potential vs induced rotation. Both potentials are evaluated under a non-Euclidean metric deformation, implemented through a smooth $\sin$-$\cos$ dependent matrix
\begin{equation}
	Q(x)=
	\begin{pmatrix}
		1+\sin(x)/2 & 0.3\cos(y)\\
		0.3\cos(y) & 1+\cos(x+y)/2
	\end{pmatrix},
\end{equation}
leading to ground‑truth dynamics $f_m(x,y)=Q(x,y)\nabla V_m(x,y)$ and $f_e(x,y)=Q(x,y)\nabla V_e(x,y)$. 

\noindent\textbf{Methods.} \textit{Data generation:}
We sample $2000$ initial conditions in $[-2,2]^2$ and generate trajectories of length $T=10$ using Euler integration with $\delta t=0.01$. States and vector-field evaluations $((x_t,y_t), f(x_t,y_t))$ form the dataset, which is split into $1600$ training and $400$ test trajectories. \textit{Architectures:} For the multi-well system we use a hybrid EBM with one softmax hidden layer ($M=128$). For the exotic system we adopt a two-layer architecture (softmax layer with $M=512$ and polynomial layer with $M=128$). In all cases, the visible layer is dynamical, hidden layers are feedforward, and the metric is learned via an MLP. \textit{Training:} The model is trained with a composite loss combining weighted field error, short-horizon rollout error, and $\ell_2$ regularization. Optimization uses AdamW, and hyperparameters are selected through Weights \& Biases (wandb) sweeps\footnote{Code avaliable at: \url{https://github.com/sim1bet/EBM_StableSystemIdentification}}.

\noindent\textbf{Numerical results.} The hybrid EBM learns a Port‑Hamiltonian representation $\mathcal{H}_E(x,y)$ of these fields. Table~\ref{tab: loss} reports the final train and test losses after $5000$ learning epochs for the identification of the two systems. In both cases, the hybrid EBM achieves low reconstruction error, with test losses remaining in the same range as the training error despite the presence of nonlinear metric distortions.
\begin{table}
    \begin{center}
        \begin{tabular}{|c||c||c|}
        Dynamics & Train Loss & Test Loss\\
        \hline
        $f_{m}$ & $0.011$ & $0.002$\\
        \hline
        $f_{e}$ & $0.047$ & $0.042$\\
        \hline
    \end{tabular}
    \end{center}
    \caption{Summary of train and test loss at the end of $n=5000$ learning epochs for system identification of $f_m$ and $f_e$ using a Port-Hamiltonian EBM $\mathcal{H}_E$.}
    \label{tab: loss}
\end{table}
\begin{figure}[!h]
\centering
\begin{minipage}{0.48\linewidth}
    \centering
    \includegraphics[width=\linewidth]{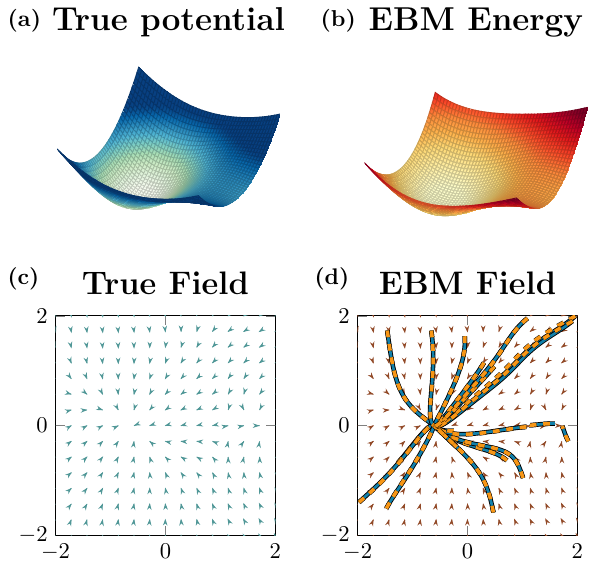}
    \caption{\textbf{EBM identification of multi-well potential dynamics}. (a) Ground‑truth potential $V_m(x,y)$. (b) EBM energy $\En(x,y)$ accurately recovering the landscape geometry. (c) True vector field. (d) Reconstructed field with true and EBM trajectories, showing consistent flow patterns and convergence to the correct attractors.\\}
    \label{fig: saddle_res}
\end{minipage}
\hfill
\begin{minipage}{0.48\linewidth}
    \centering
    \includegraphics[width=\linewidth]{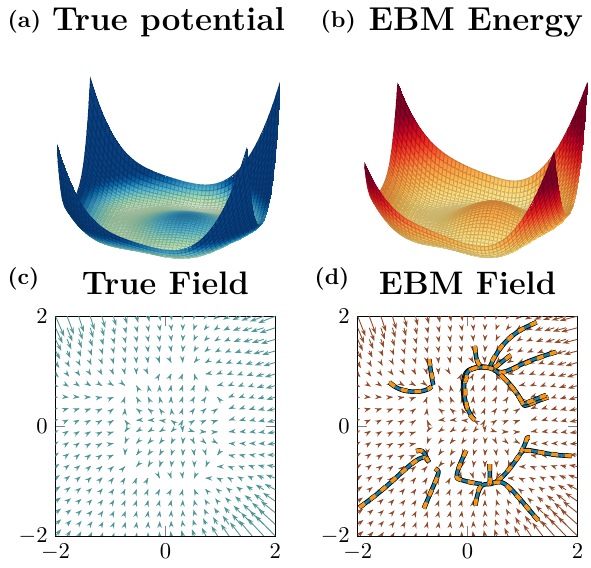}
    \caption{\textbf{EBM identification of exotic potential dynamics}. (a) Ground‑truth potential $V_e(x,y)$. (b) EBM energy $\En(x,y)$ accurately recovering the landscape ring-geometry. (c) True vector field. (d) Reconstructed field with true and EBM trajectories. EBM dynamics are consistent with the true vector field and reliably recover non-linear, rotational components.}
    \label{fig: ring_res}
\end{minipage}

\end{figure}
For the multi-well system $f_m$, the learned energy $\En$ reproduces the geometry of the true potential $V_m$, as shown in Figures~\ref{fig: saddle_res}(a)–(b). Despite a different 
magnitude scaling, the convex–concave structure is accurately captured. The reconstructed vector field and trajectories (Figures~\ref{fig: saddle_res}(c)–(d)) closely match the true 
dynamics: quiver fields are visually indistinguishable, and trajectories initialized at the same state converge to the same minima, confirming the model’s ability to recover both local flow geometry and global attractor structure.

For the exotic system $f_e$, which features a ring-shaped zero-field region and strong metric-induced distortions, the hybrid EBM accurately recovers both the global basin structure and the central ring (Figures~\ref{fig: ring_res}(a)–(b)). The reconstructed vector field and trajectories (Figures~\ref{fig: ring_res}(c)–(d)) remain consistent with the true dynamics, capturing rotational components and preserving the qualitative behavior predicted by the theoretical stability analysis. These results demonstrate that the proposed Port-Hamiltonian hybrid EBM reliably identifies nonlinear flows on non-Euclidean geometries while maintaining structural stability and interpretability through its energy-based formulation.

Direct computation of the invariance radius $r$ in~\eqref{eq: rad-Q} is generally intractable for practical EBM architectures: the width of the first hidden layer determines the dimension of the activation function image, and the corresponding search space grows exponentially. In our setting, the smallest first hidden layer has $M=128$ units, and its activation co-domain is $[0,1]^M$, making an exhaustive evaluation of the quantities entering~\eqref{eq: rad-Q} computationally prohibitive. To obtain a meaningful and computationally feasible estimate, we instead compute an \emph{expansion radius} $r_{\mathrm{ex}} < r$ using only the data-driven range explored during training. Specifically, since the dataset $\mathcal{X}$ is generated synthetically over the mesh $\mathcal{A} = [-2,2]^2$, we forward-propagate all points in $\mathcal{A}$ through the first hidden layer and use the restricted activation image $\upphi_2^f(\mathcal{A})$ to evaluate the maximal value of the terms entering~\eqref{eq: rad-Q}. This leads to a conservative but sound and computationally tractable estimate $r_{\mathrm{ex}}$ that suffices to certify absorbing invariance over the domain explored by the learned dynamics. As illustrated in Figure~\ref{fig: rad}, the expansion radius $r_{\mathrm{ex}}$ is largely insensitive to network depth (one hidden layer for $V_m$ versus two for $V_e$) prior to metric deformation. Following multiplication by the factor $q_{\max}/q_{\min}$, the radius increases by approximately one order of magnitude.
\begin{figure}[!h]
	\centering
	\includegraphics[width=.75\linewidth]{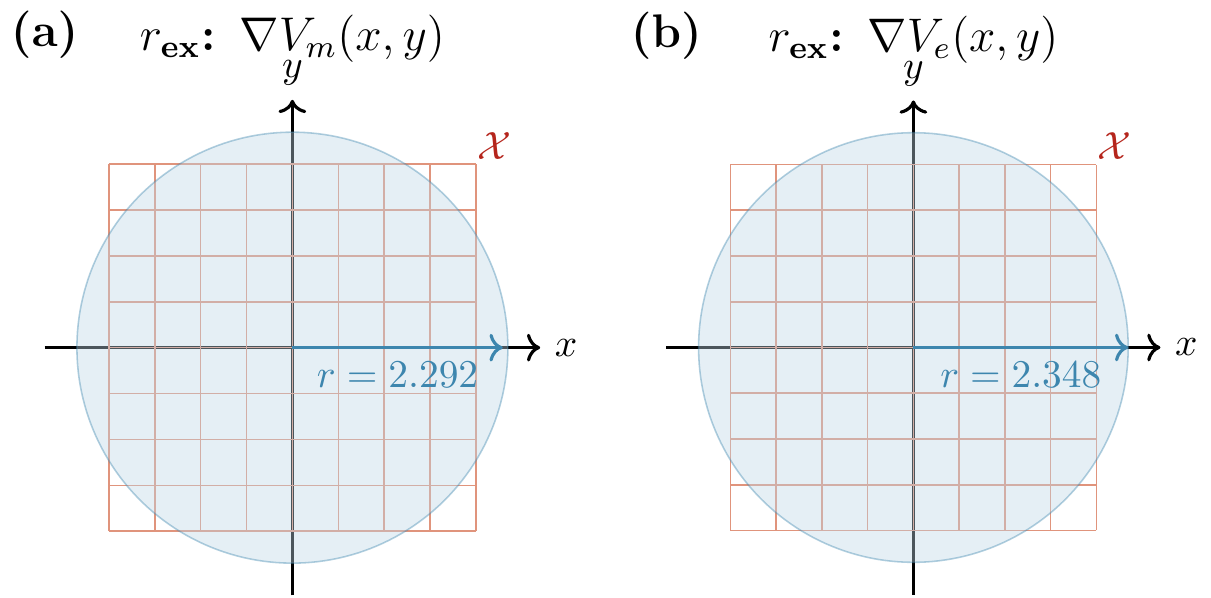}
	\caption{\textbf{Expansiveness radius}. Maximal radius of expansiveness given the data $\mathcal{X}$. The radii have similar magnitude for both the (a) multi-well potential $V_{m}(x,y)$ and the (b) exotic potential $V_e(x,y)$.}
	\label{fig: rad}
\end{figure}

\section{CONCLUSIONS}
\emph{Energy-based models} bridge dynamical systems theory and neural networks, offering expressive models that are \emph{stable-by-design}. This work advances their foundations along three dimensions. First, we generalize classical EBM stability analysis to locally Lipschitz activations, establishing energy dissipation via Clarke derivatives and deriving conditions for radial unboundedness beyond the $C^2$ setting. Second, we introduce a hybrid EBM architecture with dynamic visible layer and static feedforward-feedback hidden layer maps. We prove absorbing invariance of the hybrid EBM dynamics under mild assumptions and extend this guarantee to port-Hamiltonian flows, enabling the modeling of systems on non-Euclidean geometries. Finally, we validate the framework on metric-deformed multi-well and ring-shaped potentials, demonstrating accurate vector-field reconstruction, faithful recovery of energy-landscape geometry, and reliable trajectory tracking. These results highlight hybrid EBMs as physically grounded, stable-by-design models suited for data-driven identification in nonlinear, safety critical domains. Future work will explore integrating control inputs to enable safe and certifiable identification, further advancing energy-based modeling as a core methodology for physical AI.

\section{ACKNOWLEDGMENTS}

The authors gratefully acknowledge the contribution of IT4LIA AI Factory project EHPC-AIF-2026PG01-103.

\newpage
\bibliographystyle{unsrtnat}
\bibliography{SB-main}

@article{AKJ-EP:71,
  title = {System identification—A survey},
  volume = {7},
  ISSN = {0005-1098},
  DOI = {10.1016/0005-1098(71)90059-8},
  number = {2},
  journal = {Automatica},
  publisher = {Elsevier BV},
  author = {Åstr\"{o}m,  K.~J. and Eykhoff,  P.},
  year = {1971},
  month = mar,
  pages = {123–162}
}

@article{AL:24,
  title = {In Search of Dispersed Memories: Generative Diffusion Models Are Associative Memory Networks},
  volume = {26},
  ISSN = {1099-4300},
  DOI = {10.3390/e26050381},
  number = {5},
  journal = {Entropy},
  publisher = {MDPI AG},
  author = {Ambrogioni,  L.},
  year = {2024},
  month = apr,
  pages = {381}
}

@article{BL-CFE-NJ:18,
  title = {Optimization Methods for Large-Scale Machine Learning},
  volume = {60},
  ISSN = {1095-7200},
  DOI = {10.1137/16m1080173},
  number = {2},
  journal = {SIAM Review},
  publisher = {Society for Industrial & Applied Mathematics (SIAM)},
  author = {Bottou,  L. and Curtis,  F.~E. and Nocedal,  J.},
  year = {2018},
  month = jan,
  pages = {223–311}
}

@book{BHH-CPL:17,
  title = {Convex Analysis and Monotone Operator Theory in Hilbert Spaces},
  ISBN = {9783319483115},
  ISSN = {2197-4152},
  DOI = {10.1007/978-3-319-48311-5},
  journal = {CMS Books in Mathematics},
  publisher = {Springer International Publishing},
  author = {Bauschke,  H.~H. and Combettes,  P.~L.},
  year = {2017}
}

@article{CFH:75,
  title = {Generalized gradients and applications},
  volume = {205},
  ISSN = {0002-9947},
  DOI = {10.1090/s0002-9947-1975-0367131-6},
  journal = {Transactions of the American Mathematical Society},
  publisher = {American Mathematical Society (AMS)},
  author = {Clarke,  F.~H.},
  year = {1975},
  pages = {247–247}
}

@article{CMA-GS:83,
  title = {Absolute stability of global pattern formation and parallel memory storage by competitive neural networks},
  volume = {SMC-13},
  ISSN = {2168-2909},
  DOI = {10.1109/tsmc.1983.6313075},
  number = {5},
  journal = {IEEE Transactions on Systems,  Man,  and Cybernetics},
  publisher = {Institute of Electrical and Electronics Engineers (IEEE)},
  author = {Cohen,  M.~A. and Grossberg,  S.},
  year = {1983},
  month = sep,
  pages = {815–826}
}

@inproceedings{CRTQ-RY-BJ:18,
  author = {Chen, R.~T.~Q. and Rubanova, Y. and Bettencourt, J. and Duvenaud, D.~K.},
  doi = {10.48550/arXiv.1806.07366},
  booktitle = {Advances in Neural Information Processing Systems},
  editor = {S. Bengio and H. Wallach and H. Larochelle and K. Grauman and N. Cesa-Bianchi and R. Garnett},
  pages = {},
  publisher = {Curran Associates, Inc.},
  title = {Neural Ordinary Differential Equations},
  volume = {31},
  year = {2018}
}

@article{CR-JH-SGS:22,
  title = {A review of active probing-based system identification techniques with applications in power systems},
  volume = {140},
  ISSN = {0142-0615},
  DOI = {10.1016/j.ijepes.2022.108008},
  journal = {International Journal of Electrical Power \& Energy Systems},
  publisher = {Elsevier BV},
  author = {Chakraborty,  R. and Jain,  H. and Seo,  G.~S.},
  year = {2022},
  month = sep,
  pages = {108008}
}

@article{DM-HJ-LM:17,
  title =	 {On a Model of Associative Memory with Huge Storage
                  Capacity},
  volume =	 168,
  ISSN =	 {1572-9613},
  DOI =      {10.1007/s10955-017-1806-y},
  number =	 2,
  journal =	 {Journal of Statistical Physics},
  publisher =	 {Springer Science and Business Media LLC},
  author =	 {Demircigil, M. and Heusel, J. and Lowe, M. and Upgang,
                  S. and and Vermet, F.},
  year =	 2017,
  month =	 may,
  pages =	 {288–299}
}

@article{DP-DC-HJA:18,
  title = {Dynamic production system identification for smart manufacturing systems},
  volume = {48},
  ISSN = {0278-6125},
  DOI = {10.1016/j.jmsy.2018.04.006},
  journal = {Journal of Manufacturing Systems},
  publisher = {Elsevier BV},
  author = {Denno,  P. and Dickerson,  C. and Harding,  J.~A.},
  year = {2018},
  month = jul,
  pages = {192–203}
}

@article{DH-LB-YL:21,
  doi = {10.48550/ARXIV.2109.14152},
  author = {Dai,  H. and Landry,  B. and Yang,  L. and Pavone,  M. and Tedrake,  R.},
  title = {Lyapunov-stable neural-network control},
  journal = {arXiv},
  year = {2021},
  copyright = {Creative Commons Attribution 4.0 International}
}

@article{DJ-TA-VS:22,
  title = {Dissipative Deep Neural Dynamical Systems},
  volume = {1},
  ISSN = {2694-085X},
  DOI = {10.1109/ojcsys.2022.3186838},
  journal = {IEEE Open Journal of Control Systems},
  publisher = {Institute of Electrical and Electronics Engineers (IEEE)},
  author = {Drgona,  J. and Tuor,  A. and Vasisht,  S. and Vrabie,  D.},
  year = {2022},
  pages = {100–112}
}

@book{ELC-GRF:15,
  title = {Measure Theory and Fine Properties of Functions,  Revised Edition},
  ISBN = {9781482242393},
  DOI = {10.1201/b18333},
  publisher = {Chapman and Hall/CRC},
  author = {Evans,  L.~C. and Gariepy,  R.~F.},
  year = {2015},
  month = apr 
}

@article{HJJ:82,
  title = {Neural networks and physical systems with emergent collective computational abilities.},
  volume = {79},
  ISSN = {1091-6490},
  DOI = {10.1073/pnas.79.8.2554},
  number = {8},
  journal = {Proceedings of the National Academy of Sciences},
  publisher = {Proceedings of the National Academy of Sciences},
  author = {Hopfield,  J.~J.},
  year = {1982},
  month = apr,
  pages = {2554–2558}
}

@article{HJJ:84,
  title = {Neurons with graded response have collective computational properties like those of two-state neurons.},
  volume = {81},
  ISSN = {1091-6490},
  DOI = {10.1073/pnas.81.10.3088},
  number = {10},
  journal = {Proceedings of the National Academy of Sciences},
  publisher = {Proceedings of the National Academy of Sciences},
  author = {Hopfield,  J.~J.},
  year = {1984},
  month = may,
  pages = {3088–3092}
}

@inproceedings{HB-CDH-SH:22,
  title={A Universal Abstraction for Hierarchical Hopfield Networks},
  author={Hoover, B. and Chau, D.~H. and Strobelt, H. and Krotov, D.},
  booktitle={The Symbiosis of Deep Learning and Differential Equations II},
  year={2022},
}

@article{HB-LY-PB:23,
  title={Energy transformer},
  doi={10.48550/arXiv.2302.07253},
  author={Hoover, B. and Liang, Y. and Pham, B. and Panda, R. and Strobelt, H. and Chau, D.~H. and Zaki, M. and Krotov, D.},
  journal={Advances in neural information processing systems},
  volume={36},
  pages={27532--27559},
  year={2023}
}

@inproceedings{HB-CDH-SH:24,
  author = {Hoover, B. and Chau, D.~H. and Strobelt, H. and Ram, P. and Krotov, D.},
  booktitle = {Advances in Neural Information Processing Systems},
  doi = {10.52202/079017-0742},
  editor = {A. Globerson and L. Mackey and D. Belgrave and A. Fan and U. Paquet and J. Tomczak and C. Zhang},
  pages = {23549--23576},
  publisher = {Curran Associates, Inc.},
  title = {Dense Associative Memory Through the Lens of Random Features},
  volume = {37},
  year = {2024}
}

@inproceedings{KD-HJJ:16,
  title =	 {Dense associative memory for pattern recognition},
  booktitle =	 {Advances in neural information processing systems},
  author =	 {Krotov, D. and Hopfield, J.~J.},
  volume =	 29,
  year =	 2016,
  DOI =		 {10.48550/arXiv.1606.01164},
}

@article{KL-SJJ-KD:25,
  title = {Neuron–astrocyte associative memory},
  volume = {122},
  ISSN = {1091-6490},
  DOI = {10.1073/pnas.2417788122},
  number = {21},
  journal = {Proceedings of the National Academy of Sciences},
  publisher = {Proceedings of the National Academy of Sciences},
  author = {Kozachkov,  L. and Slotine,  J.~J. and Krotov,  D.},
  year = {2025},
  month = may 
}

@inproceedings{KD-HJJ:20,
  title =	 {Large Associative Memory Problem in Neurobiology and
                  Machine Learning},
  booktitle =	 {International Conference on Learning Representations},
  author =	 {Krotov, D. and Hopfield, J.~J.},
  year =	 2020,
  DOI = {10.48550/arXiv.2008.06996},
}

@article{LL:10,
  title = {Perspectives on system identification},
  volume = {34},
  ISSN = {1367-5788},
  DOI = {10.1016/j.arcontrol.2009.12.001},
  number = {1},
  journal = {Annual Reviews in Control},
  publisher = {Elsevier BV},
  author = {Ljung,  L.},
  year = {2010},
  month = apr,
  pages = {1–12}
}

@book{LJM:12,
  title = {Introduction to Smooth Manifolds},
  ISBN = {9781441999825},
  ISSN = {0072-5285},
  DOI = {10.1007/978-1-4419-9982-5},
  journal = {Graduate Texts in Mathematics},
  publisher = {Springer New York},
  author = {Lee,  J.~M.},
  year = {2012}
}

@article{MR-PE-RE:87,
  title = {The capacity of the Hopfield associative memory},
  volume = {33},
  ISSN = {1557-9654},
  DOI = {10.1109/tit.1987.1057328},
  number = {4},
  journal = {IEEE Transactions on Information Theory},
  publisher = {Institute of Electrical and Electronics Engineers (IEEE)},
  author = {McEliece,  R. and Posner,  E. and Rodemich,  E. and Venkatesh,  S.},
  year = {1987},
  month = jul,
  pages = {461–482}
}

@article{MS-PM-BM:20,
  doi = {10.48550/ARXIV.2003.08063},
  author = {Massaroli,  S. and Poli,  M. and Bin,  M. and Park,  J. and Yamashita,  A. and Asama,  H.},
  title = {Stable Neural Flows},
  journal = {arXiv},
  year = {2020},
  copyright = {arXiv.org perpetual,  non-exclusive license}
}

@article{PA-GR-YW:22,
  title = {Stable robot manipulator parameter identification: A closed-loop input error approach},
  volume = {141},
  ISSN = {0005-1098},
  DOI = {10.1016/j.automatica.2022.110294},
  journal = {Automatica},
  publisher = {Elsevier BV},
  author = {Perrusquía,  A. and Garrido,  R. and Yu,  W.},
  year = {2022},
  month = jul,
  pages = {110294}
}

@inproceedings{PGY-KJ-KB:23,
  author = {Park, G.~Y. and Kim, J. and Kim, B. and Lee, S.~W. and Ye, J.~C.},
  doi = {doi.org/10.48550/arXiv.2306.09869},
  booktitle = {Advances in Neural Information Processing Systems},
  editor = {A. Oh and T. Naumann and A. Globerson and K. Saenko and M. Hardt and S. Levine},
  pages = {76382--76408},
  publisher = {Curran Associates, Inc.},
  title = {Energy-Based Cross Attention for Bayesian Context Update in Text-to-Image Diffusion Models},
  volume = {36},
  year = {2023}
}

@inproceedings{RH-SB-LJ:21,
  title={Hopfield Networks is All You Need},
  doi={10.48550/arXiv.2008.02217},
  author={Ramsauer, H. and Sch{\"a}fl, B. and Lehner, J. and Seidl, P. and Widrich, M. and Gruber, L. and Holzleitner, M. and Adler, T. and Kreil, D. and Kopp, M.~K. and Klambauer, G. and Brandstetter, J. and Hochreiter, S.},
  year={2021},
  booktitle={International Conference on Learning Representations}
}

@article{RFJ-KDK-KM:25,
  doi = {10.48550/ARXIV.2502.02480},
  author = {Roth,  F.~J. and Klein,  D.~K. and Kannapinn,  M. and Peters,  J. and Weeger,  O.},
  title = {Stable Port-Hamiltonian Neural Networks},
  journal = {arXiv},
  year = {2025},
  copyright = {Creative Commons Attribution 4.0 International}
}

@book{VDSA:07,
  title = {Port-Hamiltonian systems: an introductory survey},
  ISBN = {9783985475384},
  DOI = {10.4171/022-3/65},
  booktitle = {Proceedings of the International Congress of Mathematicians Madrid,  August 22–30,  2006},
  publisher = {EMS Press},
  author = {Van der Schaft,  A.},
  year = {2007},
  month = may,
  pages = {1339–1365}
}

@inproceedings{WW-HTY-HJYC:25,
    title={In-Context Learning as Conditioned Associative Memory Retrieval},
    author={Wu, W. and Hsiao, T.~Y. and Hu, J.~Y.~C. and Zhang, W. and Liu, H.},
    booktitle={Forty-second International Conference on Machine Learning},
    year={2025},
}

@article{XY-SS:23,
  title = {Learning Dissipative Neural Dynamical Systems},
  volume = {7},
  ISSN = {2475-1456},
  DOI = {10.1109/lcsys.2023.3337851},
  journal = {IEEE Control Systems Letters},
  publisher = {Institute of Electrical and Electronics Engineers (IEEE)},
  author = {Xu,  Y. and Sivaranjani,  S.},
  year = {2023},
  pages = {3531–3536}
}

\end{document}